\documentclass[12pt]{iopart}

\usepackage{mathrsfs}

\usepackage{graphicx}
\usepackage{dcolumn}
\usepackage{bm}

\expandafter\let\csname equation*\endcsname\undefined
\expandafter\let\csname endequation*\endcsname\undefined
\usepackage{amsmath}
\usepackage{mathtools}
\usepackage{amssymb}
\usepackage{amsthm}
\usepackage{xspace}
\usepackage{xfrac}
\usepackage[mathcal]{euscript}
\usepackage{tikz}
\usepackage{tikz-3dplot} 


\usepackage{orcidlink}

\usepackage{url}
\usepackage{hyperref}
\usepackage{color}
\definecolor{refcolor}{RGB}{0,0,190}
\hypersetup{
    colorlinks,
    citecolor=refcolor,
    filecolor=refcolor,
    linkcolor=refcolor,
    urlcolor=refcolor
}

\usepackage{booktabs}

\usepackage{bookmark}
\bookmarksetup{
  numbered, 
  open,
}

\theoremstyle{definition}
\newtheorem{theorem}{Theorem}

\newtheorem{lemma}{Lemma}

\theoremstyle{remark}
\newtheorem{remark}{Remark}

\theoremstyle{definition}

\newtheorem{problem}{Problem}
\newtheorem{definition}{Definition}

\newtheorem{objection}{Objection}

\newtheorem{question}{Question}

\theoremstyle{definition}

\renewcommand{\thedefCustom}{\arabic{definition}}
\makeatletter
\newcommand{\setdefCustomtag}[1]{
  \let\oldthedefCustom\thedefCustom
  \renewcommand{\thedefCustom}{#1}
  \g@addto@macro\enddefCustom{
    \global\let\thedefCustom\oldthedefCustom}
  }
\makeatother

\renewcommand{\thecondition}{\arabic{condition}}
\makeatletter
\newcommand{\setconditiontag}[1]{
  \let\oldthecondition\thecondition
  \renewcommand{\thecondition}{#1}
  \g@addto@macro\endcondition{
    \global\let\thecondition\oldthecondition}
  }
\makeatother

\newtheorem{assumption}{Assumption}
\renewcommand{\theassumption}{\arabic{assumption}}
\makeatletter
\newcommand{\setassumptiontag}[1]{
  \let\oldtheassumption\theassumption
  \renewcommand{\theassumption}{#1}
  \g@addto@macro\endassumption{
    \global\let\theassumption\oldtheassumption}
  }
\makeatother

\renewcommand{\theclaim}{\arabic{claim}}
\makeatletter
\newcommand{\setclaimtag}[1]{
  \let\oldtheclaim\theclaim
  \renewcommand{\theclaim}{#1}
  \g@addto@macro\endclaim{
    \global\let\theclaim\oldtheclaim}
  }
\makeatother


\begin{document}


\newcommand{\orcid}[1]{\href{https://orcid.org/#1}{\textcolor[HTML]{A6CE39}{\aiOrcid}}}
\def\bibsection{\section*{\refname}} 

\newcommand{\pbref}[1]{\ref{#1} (\nameref*{#1})}
   
\def\({\big(}
\def\){\big)}

\newcommand{\tn}{\textnormal}
\newcommand{\ds}{\displaystyle}
\newcommand{\dsfrac}[2]{\displaystyle{\frac{#1}{#2}}}

\newcommand{\boplus}{\textstyle{\bigoplus}}
\newcommand{\botimes}{\textstyle{\bigotimes}}
\newcommand{\bcup}{\textstyle{\bigcup}}
\newcommand{\bsqcup}{\textstyle{\bigsqcup}}
\newcommand{\bcap}{\textstyle{\bigcap}}

\newcommand{\struct}{\mc{S}}
\newcommand{\kind}{\mc{K}}

\newcommand{\dddots}{\rotatebox[origin=t]{135}{$\cdots$}}

\newcommand{\statespace}{\mathcal{S}}
\newcommand{\hilbert}{\mathcal{H}}
\newcommand{\vectorspace}{\mathcal{V}}
\newcommand{\mc}[1]{\mathcal{#1}}
\newcommand{\mf}[1]{\mathfrak{#1}}
\newcommand{\mb}[1]{\mathbf{#1}}
\newcommand{\dU}{\wh{\mc{U}}}

\newcommand{\wh}[1]{\widehat{#1}}
\newcommand{\dwh}[1]{\wh{\rule{0ex}{1.3ex}\smash{\wh{\hfill{#1}\,}}}}

\newcommand{\wt}[1]{\widetilde{#1}}
\newcommand{\wht}[1]{\widehat{\widetilde{#1}}}
\newcommand{\on}[1]{\operatorname{#1}}

\newcommand{\qmU}{$\mathscr{U}$}
\newcommand{\qmR}{$\mathscr{R}$}
\newcommand{\qmUR}{$\mathscr{UR}$}
\newcommand{\qmDR}{$\mathscr{DR}$}

\newcommand{\R}{\mathbb{R}}
\newcommand{\C}{\mathbb{C}}
\newcommand{\Z}{\mathbb{Z}}
\newcommand{\K}{\mathbb{K}}
\newcommand{\N}{\mathbb{N}}
\newcommand{\Prj}{\mathcal{P}}
\newcommand{\abs}[1]{\left\lvert#1\right\rvert}

\newcommand{\de}{\operatorname{d}}
\newcommand{\im}{\operatorname{Im}}

\newcommand{\dof}{d.o.f.\xspace}
\newcommand{\dofs}{d.o.f.s\xspace}

\newcommand{\ie}{\textit{i.e.}\ }
\newcommand{\vs}{\textit{vs.}\ }
\newcommand{\eg}{\textit{e.g.}\ }
\newcommand{\cf}{\textit{cf.}\ }
\newcommand{\etc}{\textit{etc}}

\newcommand{\Span}{\tn{span}}
\newcommand{\pde}{PDE}
\newcommand{\U}{\tn{U}}
\newcommand{\SU}{\tn{SU}}
\newcommand{\GL}{\tn{GL}}

\newcommand{\schrod}{Schr\"odinger}
\newcommand{\vonneum}{Liouville-von Neumann}
\newcommand{\ks}{Kochen-Specker}
\newcommand{\leggarg}{Leggett-Garg}
\newcommand{\bra}[1]{\langle#1|}
\newcommand{\ket}[1]{|#1\rangle}
\newcommand{\kett}[1]{|\!\!|#1\rangle\!\!\rangle}
\newcommand{\proj}[1]{\ket{#1}\bra{#1}}
\newcommand{\braket}[2]{\langle#1|#2\rangle}
\newcommand{\ketbra}[2]{|#1\rangle\langle#2|}
\newcommand{\expectation}[1]{\langle#1\rangle}
\newcommand{\Herm}{\tn{Herm}}
\newcommand{\prodHS}[2]{\(#1,#2\)_{\tn{HS}}}
\newcommand{\Sym}[1]{\tn{Sym}_{#1}}
\newcommand{\meanvalue}[2]{\langle{#1}\rangle_{#2}}
\newcommand{\Prob}{\tn{Prob}}
\newcommand{\kjj}[3]{#1\!:\!#2,#3}
\newcommand{\kj}[2]{#1,#2}
\newcommand{\JK}{\mf{j}}
\newcommand{\obs}[1]{\mathsf{#1}}
\newcommand{\uop}[1]{\mathbf{#1}}

\newcommand{\weightU}[5]{\big[{#2}{}_{#3}\overset{#1}{\rightarrow}{#4}{}_{#5}\big]}
\newcommand{\weightUT}[8]{\big[{#3}{}_{#4}\overset{#1}{\rightarrow}{#5}{}_{#6}\overset{#2}{\rightarrow}{#7}{}_{#8}\big]}
\newcommand{\weight}[4]{\weightU{}{#1}{#2}{#3}{#4}}
\newcommand{\weightT}[6]{\weightUT{}{}{#1}{#2}{#3}{#4}{#5}{#6}}

\newcommand{\btimes}{\boxtimes}
\newcommand{\btimess}{{\boxtimes_s}}

\newcommand{\h}{\mathbf{(2\pi\hbar)}}
\newcommand{\x}{\mathbf{x}}
\newcommand{\xThree}{\boldsymbol{x}}
\newcommand{\z}{\mathbf{z}}
\newcommand{\q}{\mathbf{q}}
\newcommand{\p}{\mathbf{p}}
\newcommand{\0}{\mathbf{0}}
\newcommand{\annih}{\widehat{\mathbf{a}}}

\newcommand{\cs}{\mathscr{C}}
\newcommand{\ps}{\mathscr{P}}
\newcommand{\xhat}{\widehat{\x}}
\newcommand{\phat}{\widehat{\mathbf{p}}}
\newcommand{\fqproj}[1]{\Pi_{#1}}
\newcommand{\cqproj}[1]{\wh{\Pi}_{#1}}
\newcommand{\cproj}[1]{\wh{\Pi}^{\perp}_{#1}}

\newcommand{\M}{\mathbb{E}_3}
\newcommand{\D}{\mathbf{D}}
\newcommand{\dn}{\tn{d}}
\newcommand{\db}{\mathbf{d}}
\newcommand{\n}{\mathbf{n}}
\newcommand{\m}{\mathbf{m}}
\newcommand{\V}[1]{\mathbb{V}_{#1}}
\newcommand{\F}[1]{\mathcal{F}_{#1}}
\newcommand{\Fvacuumfield}{\widetilde{\mathcal{F}}^0}
\newcommand{\nD}[1]{|{#1}|}
\newcommand{\Lin}{\mathcal{L}}
\newcommand{\End}{\tn{End}}
\newcommand{\vbundle}[4]{{#1}\to {#2} \stackrel{\pi_{#3}}{\to} {#4}}
\newcommand{\vbundlex}[1]{\vbundle{V_{#1}}{E_{#1}}{#1}{M_{#1}}}
\newcommand{\rep}{\rho_{\scriptscriptstyle\btimes}}

\newcommand{\intl}[1]{\int\limits_{#1}}

\newcommand{\moyalBracket}[1]{\{\mskip-5mu\{#1\}\mskip-5mu\}}

\newcommand{\Hint}{H_{\tn{int}}}

\newcommand{\quot}[1]{``#1''}

\def\sref #1{\S\ref{#1}}

\newcommand{\dBB}{de Broglie--Bohm}
\newcommand{\dBBt}{{\dBB} theory}
\newcommand{\pwt}{pilot-wave theory}
\newcommand{\PWT}{PWT}
\newcommand{\NRQM}{{\textbf{NRQM}}}

\newcommand{\image}[3]{
\begin{center}
\begin{figure}[!ht]
\includegraphics[width=#2\textwidth]{#1}
\caption{\small{\label{#1}#3}}
\end{figure}
\end{center}
\vspace{-0.40in}
}

\clearpage

\title[Can we accurately read or write quantum data?]{Can we accurately read or write quantum data?}

\author{Ovidiu Cristinel Stoica\ \orcidlink{0000-0002-2765-1562}}
\address{
 Dept. of Theoretical Physics, NIPNE---HH, Bucharest, Romania. \\
	Email: \href{mailto:cristi.stoica@theory.nipne.ro}{cristi.stoica@theory.nipne.ro},  \href{mailto:holotronix@gmail.com}{holotronix@gmail.com}
	}%

\date{\today}

\begin{abstract}
Applications of quantum mechanics rely on the accuracy of reading and writing data. This requires accurate measurements and preparations of the quantum states. I show that accurate measurements and preparations are impossible if the total Hamiltonian is bounded from below (as thought to be in our universe). This result invites a reevaluation of the limitations of quantum control, quantum computing, and other quantum technologies dependent on the accuracy of quantum preparations and measurements, and maybe of the assumption that the Hamiltonian is bounded from below.\end{abstract}

\vspace{2pc}
\noindent{\it Keywords}:
{quantum measurement; no-go theorem; Wigner-Araki-Yanase theorem; quantum computing.}

\maketitle

\section{Introduction}
\label{s:intro}

Quantum measurements are extremely important for all domains of the quantum, from its foundations to experiments and technological applications.

When confronted with technological limitations, for example due to errors and decoherence, we can easily get trapped in a \emph{bottom-up} paradigm. The bottom-up paradigm is based on the assumption that if we solve the problems at a small scale, we can generalize the solution at a large scale. That is, it assumes that if we can build perfect quantum gates and isolate them from the environment, or if we apply quantum error correction codes, it becomes a matter of time and financial resources to apply them to build any quantum circuit and any quantum technology we can imagine on paper.

The sky seems to be the limit of quantum technologies.
But up-there in the sky is Spinoza's God, the laws of physics. Every quantum technology we have in mind has to be realized in this universe, and has to work according to the laws of physics of this universe. The physical laws may impose \emph{top-down} constraints on what can be achieved.
And if there are such constraints, they can't be avoided, no matter how great researchers and engineers we have and how much time and money we invest.

We can classify the top-down problems that need to be solved in two major problems:
\begin{problem}[Dynamics]
\label{problem:invariance}
The unitary operators that we want to build as quantum gates or by combining them should be realized as restrictions of the {\schrod} unitary evolution to invariant subspaces of the total Hilbert space of our world. Since we are given only this world for this, we have to investigate what unitary operators can be realized in our world, and not merely \emph{in abstracto}.
\end{problem}

\begin{problem}[Boundary conditions]
\label{problem:accuracy}
Because the technologies based on quantum information rely on reading and writing data, the accuracy of quantum measurements, either to extract information, or to prepare the quantum states, should be investigated.
\end{problem}

Therefore, the possible limitations of quantum measurements, and the possibility to accurately measure or prepare a quantum state, are of particular interest\footnote{
The term ``accuracy'' is the technical term used in the literature about quantum measurements, and it can be understood as the precision with which the obtained value corresponds to the actual eigenstate of the observed system. Another term is ``repeatability'', which means that if the measurement is repeated, the same result obtains, or, in other words, the measurement doesn't disturb the observed system {\cite{Busch2009NoInformationWithoutDisturbance}}.}.
Wigner \cite{wigner1952MessungQMOperatoren,Wigner1952MessungQMOperatorenPBusch2010EnTranslation}, followed by Araki and Yanase \cite{ArakiYanase1960MeasurementofQMOperators}, realized that additive conservation laws restrict what sharp observables can be accurately measured without being disturbed.
Wigner showed that the ideal measurement of the spin along an axis is prevented by the conservation of the total angular momentum along an orthogonal axis. From this result, known as the \emph{Wigner-Araki-Yanase} (WAY) theorem \cite{LoveridgeBusch2011MeasurementofQMOperatorsRevisited}, it seems that we have to choose between inaccuracy and disturbing the system even when it is in an eigenstate of the observable.

But, in the same paper, Wigner also showed that we can avoid this problem if the measuring device is infinitely large.

Even if the same measurement cannot be both accurate and non-disturbing, for some applications it seems enough to achieve accuracy and non-disturbance in separate measurements.
We can write (or prepare) the states using non-disturbing measurements, and read (or measure) the resulting states using accurate measurements.

While Wigner concluded in \cite{Wigner1952MessungQMOperatorenPBusch2010EnTranslation} that accurate but non-repeatable (non-disturbing) measurements are possible even without the requirement that the apparatus is very large, Ohira and Pearle showed that this would violate the Yanase condition \cite{OhiraPearle1988PerfectDisturbingMeasurements}.
The \emph{Yanase condition} \cite{Yanase1961OptimalMeasuringApparatus} requires the probe observable to commute with the conserved quantity, and it is necessary if we want to observe the result of the measurement.

Despite Wigner's well known interest in symmetries, he seemed to use the existence of a conservation law as a mere signature of unitarity. But there seems to be a deeper lesson about symmetry encoded in the WAY theorem. A deep relation between symmetries, measurements, and quantum reference frames was clarified by Loveridge, Miyadera, and Busch \cite{LoveridgeMiyaderaBusch2018SymmetryReferenceFramesRelationalQuantitiesInQuantumMechanics}.
However, the constraints on measurements extend beyond the direct applications of symmetries, as we can see from Ozawa's result that measurements of non-discrete observables are not repeatable \cite{Ozawa1984QuantumMeasuringProcessesOfContinuousObservables}, and from Tukiainen's generalization of the WAY theorem beyond conservation laws \cite{Tukiainen2017WignerArakiYanaseTheoremBeyondConservationLaws}.
For an up to date review of results about limitations of quantum measurements see Busch \cite{Busch2009NoInformationWithoutDisturbance}.
The WAY theorem was later extended to the case when the conserved quantity has a continuous and unbounded spectrum \cite{KuramochiTajima2023WAYTheoremForContinuousAndUnboundedConservedObservables}.
The measurability and disturbance parts of the WAY theorem were also extended to unsharp observables (positive operator-valued measures), in the presence of bounded conserved quantities, and necessary conditions for either accurate (in an unsharp, probabilistic sense) or non-disturbing measurements can be achieved \cite{MohammadyMiyaderaLoveridge2023MeasurementDisturbanceAndConservationLawsInQuantumMechanics}.
But the full scope of the WAY theorem and its generalizations is still being charted.

Another limitation comes from the fact that the Third Law of Thermodynamic prohibits the preparation of pure states \cite{SchulmanMorWeinstein2005PhysicalLimitsOfHeatBathAlgorithmicCooling,MasanesOppenheim2017AGeneralDerivationAndQuantificationOfTheThirdLawOfThermodynamics}.
Recently it has been shown that ideal projective measurements require infinite resource costs \cite{GuryanovaFriisHuber2020IdealProjectiveMeasurementsHaveInfiniteResourceCosts}. Without a proper preparation of the pointer state in the ``ready'' state, the statistics of the results will be biased. Because of the Third Law of Thermodynamics, infinite resource costs are needed for the preparation of the pointer in a suitable state, and also to ensure that the interaction between the observed system and the pointer achieves the correct correlation. 
In \cite{Mohammady2023QuantumMeasurementsConstrainedByTheThirdLawOfThermodynamics} ways to avoid this limitation were explored, suggesting that sufficiently unsharp observables may avoid it.

Ozawa warned about the possible top-down limitations imposed by the WAY theorem on quantum computing in \cite{Ozawa2002ConservativeQuantumComputing,Ozawa2003UncertaintyPrincipleForQuantumInstrumentsAndComputing}.
Lidar replied that it is already taken into account that quantum computing should be implemented on invariant subspaces of the Hilbert space \cite{Lidar2003CommentOnConservativeQuantumComputing}.
More investigations of the limits imposed by conservation laws on quantum computing were reported in \cite{KarasawaOzawa2007Conservation-law-inducedQuantumLimitsForPhysicalRealizationsOfTheQuantumNOTGate,KarasawaGeaBanaclocheOzawa2009GateFidelityOfArbitrarySingleQubitGatesConstrainedByConservationLaws}.

An apparently unrelated factor is the spectrum of the Hamiltonian, which is axiomatically considered to be bounded from below, {\ie} that all energy eigenvalues are larger than some fixed real value \cite{Pauli1980GeneralPrinciplesOfQuantumMechanics}.
A reason often invoked is that otherwise a system will presumably decay to arbitrarily low energies \cite{Dirac1930ATheoryOfElectronsAndProtons}.

Here I show that if the Hamiltonian is bounded from below, no quantum measurement of a sharp observable, and no state preparation, can be accurate.
The main result is proved for pure states in Section \sref{s:theorem}, and generalized to mixed states in Section \sref{s:mixed}. Section \sref{s:discussion} concludes the article with a brief discussion of the implications of these results.

\section{The main result}
\label{s:theorem}

We consider the quantum measurement of an observable represented by a self-adjoint operator $\obs{A}$.
The observed system $S$ is assumed to be initially in a state represented by the unit vector $\psi(0)$ from a Hilbert space $\hilbert_{S}$.
The initial, ``ready'' state of the measuring device $M$ is denoted by a unit vector $\varphi_{\varnothing}$ from a Hilbert space $\hilbert_{M}$.

Let the \emph{measured observable} $\obs{A}$ be, in terms of projectors $\obs{P}_\lambda$ adding up to the identity on $\hilbert_{S}$,
\begin{equation}
\label{eq:observable-S-spectral}
\obs{A}=\sum_\lambda\lambda\obs{P}_\lambda.
\end{equation}

Let the \emph{pointer observable} $\obs{Z}$ of the measuring device be, in terms of projectors $\Pi_\xi$ adding up to the identity on $\hilbert_{M}$,
\begin{equation}
\label{eq:observable-M-spectral}
\obs{Z}=\sum_\xi\xi\Pi_\xi.
\end{equation}

The observables $\obs{A}$ and $\obs{Z}$ may have continuous spectra, but for simplicity we will use the notations \eqref{eq:observable-S-spectral} and \eqref{eq:observable-M-spectral}.
In any case, we can decompose the continuous spectra into discrete regions and be interested only in discrete differences, since in practice we discretize the results of the measurements anyway.
But the proofs from this article can be applied to continuous spectra as well.

We can assume without loss of generality that the spectrum of the observable $\obs{Z}$ contains the spectrum of the observable $\obs{A}$, and an extra eigenvalue for the ``ready'' state $\varphi_{\varnothing}$.

We make the following assumption about the total Hamiltonian:
\begin{assumption}[No negative energy]
\label{assumption:non-negativity}
The Hamiltonian of the combined system consisting of the measuring device and the observed system is bounded from below.
\end{assumption}

We also make the following assumption about the measurement:
\begin{assumption}[Isolation]
\label{assumption:isolated}
The combined system consisting of the measuring device and the observed system is isolated for the duration of the measurement.
\end{assumption}

The evolution of the combined system $S+M$ is governed by the unitary operator $\uop{U}_t=e^{-\frac{i}{\hbar}t\obs{H}}$.
We consider interactions of the form
\begin{equation}
\label{eq:premeasurement}
\uop{U}_T(\psi(0)\otimes\underbrace{\varphi_{\varnothing}(0)}_{\mathclap{\in\Pi_{\varnothing}\hilbert_{M}}})=\sum_\lambda c_\lambda\psi_\lambda(T)\otimes \underbrace{\varphi_\lambda(T)}_{\in\Pi_{\lambda}\hilbert_{M}},
\end{equation}
where $\varphi_{\varnothing}(0)$ and each $\varphi_\lambda(T)$ are eigenstates of the pointer observable $\obs{Z}$\footnote{If the operator {$\obs{Z}$} is nondegenerate, it may seem redundant to  include the dependency of time in the notation when we already labeled the eigenstates with the eigenvalue, so it may seem that we could just write, for example, {$\varphi_\lambda$} instead of {$\varphi_\lambda(T)$}. But we do it anyway because we want to allow degeneracy for maximal generality, and because we want to indicate the time.}. Here, {$\psi_\lambda(T)$} is not necessarily an eigenvalue of {$\obs{A}$}, the label {$\lambda$} being used for summation.

We are interested in two particular kinds of processes, measurements and preparations.

\begin{definition}
\label{def:accurate-measurement}
An \emph{accurate measurement} happens if, assuming that the pointer was in a ``ready'' state {$\varphi_{\varnothing}(0)$} at $t=0$, $\uop{U}_T$ maps any separable state $\psi_\lambda(0)\otimes\varphi_{\varnothing}(0)$, where $\psi_\lambda(0)$ is an eigenstate of $\obs{A}$, to another separable state $\psi_\lambda(T)\otimes \varphi_\lambda(T)$,
\begin{equation}
\label{eq:calibration-measurement}
\uop{U}_T\(\underbrace{\psi_\lambda(0)}_{\in\obs{P}_{\lambda}\hilbert_{S}}\otimes\varphi_{\varnothing}(0)\)\underset{\substack{\phantom{O}\\\Longleftrightarrow}}{=}\psi_\lambda(T)\otimes \underbrace{\varphi_\lambda(T)}_{\in\obs{\Pi}_{\lambda}\hilbert_{M}},
\end{equation}
where $\varphi_\lambda(T)$ is an eigenstate of the pointer observable $\obs{Z}$, and, from the pointer indicating the value $\lambda$ of the pointer observable $\obs{Z}$, we can infer that the observed system was, at $t=0$, in an eigenstate of $\obs{A}$ with eigenvalue $\lambda$.
This is not assumed to be true if the observed system was, initially, not in an eigenstate of the observable or if the pointer was not in a ``ready'' state at $t=0$.
\end{definition}

\begin{definition}
\label{def:accurate-preparation}
An \emph{accurate preparation} happens if, assuming that the pointer was in a ``ready'' state {$\varphi_{\varnothing}(0)$} at $t=0$, from the pointer indicating the value $\lambda$ at the end of the measurement $t=T$, we can infer that $\psi_\lambda(T)$ is an eigenstate of $\obs{A}$ with eigenvalue $\lambda$,
\begin{equation}
\label{eq:calibration-preparation}
\uop{U}_T\(\psi(0)\otimes\varphi_{\varnothing}(0)\)=\sum_\lambda c_\lambda\underbrace{\psi_\lambda(T)}_{\in\obs{P}_{\lambda}\hilbert_{S}}\underset{\substack{\phantom{O}\\\Longleftarrow}}{\otimes} \underbrace{\varphi_\lambda(T)}_{\in\obs{\Pi}_{\lambda}\hilbert_{M}}.
\end{equation}
\end{definition}

At time $t=T$ we retain from the superposition \eqref{eq:premeasurement} a single outcome $\psi_\lambda(T)\otimes \varphi_\lambda(T)$, by invoking our favorite recipe, for example the Projection Postulate or an alternative proposal based on decoherence like the many-worlds interpretation or the pilot-wave theory,
\begin{equation}
\label{eq:collapse}
\sum_\lambda c_\lambda\psi_\lambda(T)\otimes \varphi_\lambda(T)\mapsto \psi_{\lambda_{\tn{o}}}(T)\otimes \varphi_{\lambda_{\tn{o}}}(T),
\end{equation}
where {$\lambda_{\tn{o}}$} is a particular outcome of the measurement or the preparation.

In both cases, we also make the following assumption, necessary for the readability of the pointer state:
\begin{assumption}[Pointer persistence]
\label{assumption:stable}
There is a time interval $[T,T']$, $T'>T$, so that, if the total system $S+M$ remains isolated for $t\in[T,T']$, the pointer states $\varphi_\lambda(t)$ remain eigenstates of $\obs{Z}$ for each eigenvalue $\lambda$, that is,
\begin{equation}
\label{eq:pointer:stable}
\varphi_\lambda(t) \in \Pi_{\lambda}\hilbert_{M}
\end{equation}
for all $t\in[T,T']$ and any $\lambda$, where $\Pi_{\lambda}\hilbert_{M}$ is the eigenspace of $\obs{Z}$ corresponding to the eigenvalue $\lambda$ of $\obs{Z}$, {\cf} equation \eqref{eq:observable-M-spectral}.
\end{assumption}

\begin{proof}[Justification]
Note that {$\obs{Z}$} can be degenerate (and it usually is highly degenerate, because the pointer is a macroscopic object), and {$\varphi_\lambda(t)$} can change in time, but, for $t\in[T,T']$, {$\varphi_\lambda(t)$} is assumed to remain confined to the eigenspace {$\Pi_{\lambda}\hilbert_{M}$} of {$\obs{Z}$}, otherwise it would be impossible to know the result of the measurement. Whatever kind of quantum measurement we may have in mind, the experimentalist needs to be able to read the result. The outcome is recorded temporarily somewhere, on a photographic plate, in the position of a pointer and so on. It would be unrealistic to assume that it is recorded forever, but at least for a finite time interval $[T,T']$, the experimentalist should be able to read the result.
And the result is recorded in the eigenvalue of the pointer's state vector, which is an eigenvector of the pointer observable. This minimal requirement absolutely necessary for experiments to be possible is expressed in Assumption~{\ref{assumption:stable}}.
\end{proof}

The Definitions~{\ref{def:accurate-measurement}} and {\ref{def:accurate-preparation}} allow treating the measurement and the preparation as particular cases of equation~{\eqref{eq:premeasurement}}, where the pointer indicates the state of the observed system at time $t=0$ (measurements) and $t=T$ (preparations).
Both of these cases will be covered by the proof of the following result:
\begin{theorem}
\label{thm:no-accuracy}
If the Hamiltonian is bounded from below, accurate measurements and preparations are impossible.
\end{theorem}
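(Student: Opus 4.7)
The plan is to promote Assumption~\ref{assumption:non-negativity} into analyticity of the {\schrod} orbit $t\mapsto\uop{U}_t\Psi$ in a complex half-plane, and then to combine this with pointer persistence (Assumption~\ref{assumption:stable}) to manufacture a vector-valued analytic function that vanishes on a real interval yet is non-zero at $t=0$. First I would check that if $\obs{H}\geq E_0$ then the spectral calculus gives $\uop{U}_{s-i\tau}=e^{-is\obs{H}/\hbar}e^{-\tau\obs{H}/\hbar}$ with operator norm at most $e^{-\tau E_0/\hbar}$ for $\tau\geq 0$, so that for every $\Psi\in\hilbert_S\otimes\hilbert_M$ the map $t\mapsto\uop{U}_t\Psi$ is a vector-valued holomorphic function on the open lower half-plane $\{\im t<0\}$ and norm-continuous on its closure.

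Next I would fix any eigenvalue $\lambda$ of $\obs{A}$ with a unit eigenstate $\psi_\lambda(0)$ and track the ``ready-sector'' residue
\[
G(t):=(I_S\otimes\Pi_\varnothing)\,\uop{U}_t\bigl(\psi_\lambda(0)\otimes\varphi_\varnothing(0)\bigr),
\]
i.e., the component of the evolved composite state in which the pointer still lies in its initial eigenspace $\Pi_\varnothing\hilbert_M$. At $t=0$ one has $G(0)=\psi_\lambda(0)\otimes\varphi_\varnothing(0)\neq 0$. For $t\in[T,T']$, Definition~\ref{def:accurate-measurement} combined with Assumption~\ref{assumption:stable} forces $\uop{U}_t(\psi_\lambda(0)\otimes\varphi_\varnothing(0))=\psi_\lambda(t)\otimes\varphi_\lambda(t)$ with $\varphi_\lambda(t)\in\Pi_\lambda\hilbert_M$, and since the ``ready'' label $\varnothing$ is disjoint from the spectrum of $\obs{A}$, $\Pi_\varnothing\varphi_\lambda(t)=0$ and hence $G$ vanishes identically on $[T,T']$. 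The preparation case (Definition~\ref{def:accurate-preparation}) works the same way with $\psi_\lambda(0)$ replaced by a generic non-zero $\psi(0)$: each pointer factor in the right-hand side of Equation~\eqref{eq:calibration-preparation} still lives in $\Pi_\lambda\hilbert_M$ for some $\lambda\neq\varnothing$ throughout $[T,T']$, so the $\Pi_\varnothing$-component again vanishes there while equalling $\psi(0)\otimes\varphi_\varnothing(0)\neq 0$ at $t=0$.

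Finally I would propagate this interval-vanishing into a contradiction: for each $\Phi\in\hilbert_S\otimes\hilbert_M$, the scalar $g_\Phi(t):=\braket{\Phi}{G(t)}$ is holomorphic on $\{\im t<0\}$, continuous up to $\R$, and zero on $[T,T']$; extending $g_\Phi$ by zero across $(T,T')$ and invoking Morera's theorem (equivalently Schwarz reflection) produces a function holomorphic across the interval that is zero on an open set, hence identically zero on the closed lower half-plane, and in particular on all of $\R$. Since $\Phi$ is arbitrary this forces $G\equiv 0$ on $\R$, contradicting $G(0)\neq 0$. The main subtle point I expect is that Assumption~\ref{assumption:stable} must genuinely deliver an \emph{interval} $[T,T']$ of vanishing rather than the single instant $t=T$—one instant would not be enough to trigger analytic continuation—but this is precisely what Equation~\eqref{eq:pointer:stable} guarantees; the passage from the scalar reflection lemma to the vector-valued statement $G\equiv 0$ is then routine.
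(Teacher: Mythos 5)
Your proposal is correct and follows essentially the same route as the paper: semiboundedness of $\obs{H}$ gives holomorphy of $t\mapsto\uop{U}_t\Psi_0$ in the lower half-plane, pointer persistence gives vanishing of the relevant component on the interval $[T,T']$, and Schwarz reflection propagates that vanishing to $t=0$, contradicting $\varphi_\varnothing(0)\in\Pi_\varnothing\hilbert_M$. The only difference is cosmetic: you prove the Hegerfeldt--Ruijsenaars lemma inline (testing the $\Pi_\varnothing$-component against arbitrary $\Phi$) rather than citing it as a black box applied to the subspace $\hilbert_S\otimes\Pi_\lambda\hilbert_M$, which is exactly how the paper itself proves that lemma in its Methods section.
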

\begin{proof}
We will need the Lemma from \cite{HegerfeldtRuijsenaars1980RemarksOnCausalityLocalizationAndSpreadingOfWavePackets}, p. 378:
\begin{lemma}[Hegerfeldt \& Ruijsenaars]
\label{lemma:HegerfeldtRuijsenaars}
Let $\obs{H}$ be a self-adjoint operator on a Hilbert space $\hilbert$, $\hilbert'$ a closed subspace of $\hilbert$, and $\Psi_0\in\hilbert$. If $\obs{H}$ is bounded from below and $e^{-\frac{i}{\hbar}t\obs{H}}\Psi_0\in\hilbert'$ for all values of $t$ in an open interval $(T,T')$, then $e^{-\frac{i}{\hbar}t\obs{H}}\Psi_0\in\hilbert'$ for all $t\in\R$.
\qed
\end{lemma}

We apply Lemma \ref{lemma:HegerfeldtRuijsenaars} to the following case (with the notations from the Lemma):
\begin{enumerate}
	\item $\hilbert=\hilbert_{S}\otimes\hilbert_{M}$.
	\item $\Psi_0=\uop{U}_{T}^{-1}\(\psi_\lambda(T)\otimes \varphi_\lambda(T)\)$, where {$\varphi_\lambda(T)$} is an eigenstate of the pointer observable.
	\item $\hilbert'=\hilbert_{S}\otimes\Pi_{\lambda}\hilbert_{M}$.
\end{enumerate}

The conditions~{\eqref{eq:calibration-measurement}}, respectively~{\eqref{eq:calibration-preparation}}, apply, in particular, to initial states that evolve into a definite value for the pointer without having to appeal to the projection from equation~{\eqref{eq:collapse}}. In the case of a measurement, this happens if
\begin{equation}
\label{eq:initial-state-measurement}
\Psi_0=\psi(0)\otimes\varphi_{\varnothing}(0),
\end{equation}
where {$\psi(0)$} is an eigenvector {$\psi_{\lambda}(0)$} of {$\obs{A}$} for some eigenvalue {$\lambda$}. In the case of a preparation, this happens if {$\Psi_0$} is an eigenvector of the operator resulting by evolving back to $t=0$ the operator $\obs{P}_{\lambda}\hilbert_{S}\otimes\obs{\Pi}_{\lambda}\hilbert_{M}$. In both cases, the pointer is assumed to start in a ``ready'' state. Since both measurements and preparations should work as well for these situations in which the evolution is strictly unitary, we can use them to derive a contradiction.

From Assumption \ref{assumption:stable}, $\uop{U}_t\Psi_0\in\hilbert'$ for all $t\in(T,T')$, so the hypothesis of Lemma \ref{lemma:HegerfeldtRuijsenaars} is satisfied.
Then, its conclusion is that $\uop{U}_t\Psi_0\in\hilbert'$ for all $t\in\R$ and all eigenvalues {$\lambda$}.
In particular, for $t=0$, since $\uop{U}_0\(\Psi_0\)=\Psi_0$, we obtain
\begin{equation}
\label{eq:wrong-initial-state}
\Psi_0\in \hilbert_{S}\otimes\Pi_{\lambda}\hilbert_{M}.
\end{equation}
From this,
\begin{equation}
\label{eq:wrong-ready-state}
\varphi_{\varnothing}(0)\in\Pi_{\lambda}\hilbert_{M},
\end{equation}
in contradiction with the assumption that the ``ready'' pointer state $\varphi_{\varnothing}(0)\in\Pi_{\varnothing}\hilbert_{M}$, or, in general, that $\varphi_{\varnothing}(0)$ cannot be an eigenstate of $\obs{Z}$ for the eigenvalue $\lambda$!

Therefore, if the Hamiltonian is bounded from below, equation \eqref{eq:calibration-measurement} cannot be satisfied, so accurate measurements are impossible.
Similarly, equation \eqref{eq:calibration-preparation} cannot be satisfied, so accurate preparations are impossible too.
\end{proof}

We will deal more with the general cases when the observed system and the measuring device are entangled or even mixed states in Section~{\sref{s:mixed}}.

\begin{remark}
\label{rem:thm-explained}
The proof is based on extracting conditions for the ``ready'' pointer state from condition {\eqref{eq:calibration-measurement}}, respectively {\eqref{eq:calibration-preparation}}, applied to different values of {$\lambda$}. For the eigenstates, the evolution is unitary, without projection,  at least between $t=0$ and $t=T'$, because the observed system already is in an eigenstate of the observable.
As seen in the proof, these conditions turn out to require, for each eigenvalue {$\lambda$}, that the ``ready'' pointer state has to be the pointer state corresponding to {$\lambda$}, which is a contradiction. 

Since conditions {\eqref{eq:calibration-measurement}} and {\eqref{eq:calibration-preparation}} are assumed to be satisfied by the measuring device in standard Quantum Mechanics, they are assumed in all interpretations, even those that include the Projection Postulate, because this Postulate is invoked only if the measurement process leads to a superposition of different outcomes.
\end{remark}

\section{Methods, explanations, and physical interpretation}
\label{s:methods}

The method used in the proof relies on obtaining a contradiction from the assumption that the Hamiltonian is bounded from below, so it is a proof by \emph{reductio ad absurdum}. Theorem~{\ref{thm:no-accuracy}} also assumes that the measurement takes place in isolation (Assumption~{\ref{assumption:isolated}}), and that the value indicated by the pointer persists for at least a finite duration (Assumption~{\ref{assumption:stable}}).
The model of measurement is the most general that assumes that the observed system is initially separated from the measuring device and that the observable is sharp.
However, the condition of separability is dropped in the generalization from Theorem~{\ref{thm:no-accuracy-mixed}}, Section~{\sref{s:mixed}}.

The proof starts with the conditions {\eqref{eq:calibration-measurement}} and respectively {\eqref{eq:calibration-preparation}}, sometimes called in the literature \emph{calibration conditions}, \cite{BuschLahtiPellonpaaYlinen2016QuantumMeasurement,Busch2009NoInformationWithoutDisturbance,BuschJaeger2010UnsharpQuantumReality}.
These conditions have to be satisfied regardless of the interpretation of Quantum Mechanics, and don't require projection or collapse, because the state is already an eigenstate.
The proof shows that, if the observed system is in an eigenstate for the eigenvalue $\lambda$ of $\obs{A}$, the assumption that the Hamiltonian is bounded from below (Assumption~{\ref{assumption:non-negativity}}), combined with the other assumptions, implies that the pointer's ``ready'' state $\varphi_{\varnothing}(0)$ has to be an eigenvalue for the eigenvalue $\lambda$ for $\obs{Z}$.
Consider two distinct observables $\lambda\neq\lambda'$.
Then, the proof of Theorem~{\ref{thm:no-accuracy}} shows that $\varphi_{\varnothing}(0)$ has to be an eigenvector for both $\lambda$ and $\lambda'$. But this is possible only if $\lambda=\lambda'$, in which case there is no measurement, or if $\varphi_{\varnothing}=0$, but this vector can't represent a pointer state because it is not a unit vector. The pointer states $\varphi_{\varnothing}(0)$, $\varphi_{\lambda}$, and $\varphi_{\lambda'}$ have to be mutually orthogonal, as in Figure~{\ref{fig:contradiction}}.

\begin{figure}[h!]
\centering
\tdplotsetmaincoords{60}{120} 
\begin{tikzpicture} [scale=3, tdplot_main_coords, axis/.style={->,blue,thick}, 
vector/.style={-stealth,red,ultra thick}]

\coordinate (O) at (0,0,0);

\draw[axis] (0,0,0) -- (1,0,0) node[anchor=north east]{$\varnothing$};
\draw[axis] (0,0,0) -- (0,1,0) node[anchor=north west]{$\lambda$};
\draw[axis] (0,0,0) -- (0,0,1) node[anchor=south]{$\lambda'$};

\draw[vector] (O) -- (0.7,0,0) node[anchor=east]{$\varphi_{\varnothing}$};
\draw[vector] (O) -- (0,0.7,0) node[anchor=south]{$\varphi_{\lambda}$};
\draw[vector] (O) -- (0,0,0.7) node[anchor=east]{$\varphi_{\lambda'}$};

\draw[blue] (0,0,0.15) -- (0,0.15,0.15) -- (0,0.15,0) -- (0.15,0.15,0) -- (0.15,0,0) -- (0.15,0,0.15) -- (0,0,0.15);

\end{tikzpicture}
\caption{The contradiction derived in the proof is that the pointer's ready state $\varphi_{\varnothing}$ has to be an eigenstate for both $\lambda$ and $\lambda'$, while in fact $\varphi_{\varnothing}$, $\varphi_{\lambda}$, and $\varphi_{\lambda'}$ have to be mutually orthogonal.}
\label{fig:contradiction}
\end{figure}

Perhaps the least intuitive part of the proof is Lemma~{\ref{lemma:HegerfeldtRuijsenaars}}, so let's give here its proof, following \cite{HegerfeldtRuijsenaars1980RemarksOnCausalityLocalizationAndSpreadingOfWavePackets}.
\begin{proof}[Proof of Lemma~\ref{lemma:HegerfeldtRuijsenaars}]
Let $\Phi$ be an arbitrary vector orthogonal to $\hilbert'$. Let $z=t+iy$, with $y\leq 0$. Then, because $\obs{H}$ is bounded from below, the function $g(z):=\braket{\Phi}{e^{-i\obs{H}z}\Psi}$ is analytic for $y<0$ (see for example \cite{StreaterWightman2000PCTSpinAtatisticsAndAllThat}). The function $g(z)$ is continuous for $y\leq 0$ and, because $\Psi(t)\in\hilbert'$ for $t\in(T,T')$, $g(z)$ vanishes for $z=t\in(T,T')$. For the next step of the proof we apply a theorem in complex analysis, called the \emph{Schwarz reflection principle} (see for example \cite{Simon2015BasicComplexAnalysis,StreaterWightman2000PCTSpinAtatisticsAndAllThat}), according to which if the function $g(z)$ defined on $\{z=t+iy\in\C|y\leq 0\}$ is analyting for $y<0$, continuous for $y\leq 0$, and real for $z=t\in(T,T')$, it can be extended analytically on the entire complex plane $\C$ by making $g(t-iy):=g(t+iy)^\ast$ for all $y> 0$. Since $g(z)$ is analytic and vanishes on $(T,T')$, it follows that it vanishes everywhere, in particular for all $t\in\R$. This implies that $\Phi\perp\obs{U}_t$ for all $t\in\R$, so $e^{-\frac{i}{\hbar}t\obs{H}}\Psi_0\in\hilbert'$ for all $t\in\R$.
\end{proof}

\section{Generalization to mixed states}
\label{s:mixed}

We may hope to get around Theorem \ref{thm:no-accuracy} by relaxing the defining conditions of measurements and preparations.
For example, we can allow the observed system and the pointer to be in mixed states.
Most generally, we can describe their combined state by a density operator $\rho(t)$, and not by a state vector $\Psi(t)$.
Then, if the pointer state is initially in a mixed ``ready'' state, the initial state of the combined system satisfies the condition
\begin{equation}
\label{eq:mixed-ready-pointer-initial}
\tr_{S}\rho(0)=\Pi_{\varnothing}\(\tr_{S}\rho(0)\)\Pi_{\varnothing}^\dagger,
\end{equation}
where $\tr_{S}\(\rho(0)\)$ is the reduced density operator obtained by partial trace over the observed system.
The meaning of equation \eqref{eq:mixed-ready-pointer-initial} is that the reduced density operator representing the pointer vanishes outside of the eigenspace representing the ``ready'' state of the pointer.

Condition \eqref{eq:mixed-ready-pointer-initial} may be too relaxed, because it allows the observed system and the measuring device to be entangled from the start. If they are not entangled, $\rho(0)$ has the form $\rho(0)=\rho(0)_{S}\otimes\rho(0)_{M}$. Then already $\tr_{S}\rho(0)=\tr_{S}\(\rho(0)_{S}\otimes\rho(0)_{M}\)=\rho(0)_{M}$ and condition \eqref{eq:mixed-ready-pointer-initial} becomes $\Pi_{\varnothing}\rho(0)_{M}\Pi_{\varnothing}^\dagger=\rho(0)_{M}$.
Therefore, condition \eqref{eq:mixed-ready-pointer-initial} is more relaxed than normally needed, but we will allow it for the sake of full generality.

After the interaction between the measuring device and the observed system, the Projection Postulate (or a decoherence-based approach) leaves us with a mixture of states containing pointer eigenstates of $\obs{Z}$ for only one of the eigenvalues $\lambda$.
That is, it leaves us with the density operator $\(\obs{I}_{S}\otimes\Pi_{\lambda}\)\rho(T)\(\obs{I}_{S}\otimes\Pi_{\lambda}\)^\dagger$.

In the case of an accurate measurement, if at the beginning of the interaction the pointer state satisfied condition \eqref{eq:mixed-ready-pointer-initial} and the observed system was a mixture of eigenstates for the eigenvalue $\lambda$, that is, it satisfied
\begin{equation}
\label{eq:mixed-measurement-observed-initial}
\tr_{M}\rho(0)=\obs{P}_{\lambda}\(\tr_{M}\rho(0)\)\obs{P}_{\lambda}^\dagger,
\end{equation}
then, at the end of the interaction, the total state satisfies
\begin{equation}
\label{eq:mixed-projected-pointer-final-lambda}
\rho(T)=\(\obs{I}_{S}\otimes\Pi_{\lambda}\)\rho(T)\(\obs{I}_{S}\otimes\Pi_{\lambda}\)^\dagger
\end{equation}
and conversely.

In the case of an accurate preparation, if condition \eqref{eq:mixed-projected-pointer-final-lambda} is satisfied ({\ie} the pointer state is a mixture of eigenstates for the eigenvalue $\lambda$ of $\obs{Z}$), the density operator of the observed system at the time $T$ is a mixture of eigenstates of $\obs{A}$ for the eigenvalue $\lambda$, so it has to vanish outside of the eigenspace  of $\obs{A}$ with eigenvalue $\lambda$,
\begin{equation}
\label{eq:mixed-preparation-observed-initial}
\tr_{M}\rho(T)=\obs{P}_{\lambda}\(\tr_{M}\rho(T)\)\obs{P}_{\lambda}^\dagger.
\end{equation}

To accommodate mixed states, we relax condition \eqref{eq:pointer:stable} from Assumption \ref{assumption:stable} to
\begin{equation}
\label{eq:pointer:stable-mixed}
\tr_{S}\rho(t)=\Pi_{\lambda}\(\tr_{S}\rho(t)\)\Pi_{\lambda}^\dagger
\end{equation}
for any $t\in[T,T']$.

These conditions give the most relaxed, the most general characterization of accurate measurements or preparations of mixed states for sharp observables.
If we try to generalize it even more by considering unsharp observables (\emph{positive operator-valued measures} \cite{DaviesLewis1970OperationalApproachToQuantumProbability,BuschLahtiPellonpaaYlinen2016QuantumMeasurement}) instead of the projective observable $\obs{A}$, measurements or preparations will automatically be unable to accurately distinguish states, by construction.

Theorem \ref{thm:no-accuracy} generalizes to mixed states,
\begin{theorem}
\label{thm:no-accuracy-mixed}
If the Hamiltonian is bounded from below, accurate measurements and preparations of mixed states are impossible.
\end{theorem}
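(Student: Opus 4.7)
The plan is to reduce Theorem~\ref{thm:no-accuracy-mixed} to the pure-state Theorem~\ref{thm:no-accuracy} via a pure-state decomposition of $\rho(t)$. Since the combined system $S+M$ is isolated (Assumption~\ref{assumption:isolated}), the evolution is unitary, so I write $\rho(t)=\sum_i p_i\ket{\Psi_i(t)}\bra{\Psi_i(t)}$ with $p_i>0$ and $\ket{\Psi_i(t)}=\uop{U}_t\ket{\Psi_i(0)}$. Each of the density-operator conditions~\eqref{eq:mixed-ready-pointer-initial}, \eqref{eq:mixed-measurement-observed-initial}, \eqref{eq:mixed-projected-pointer-final-lambda}, and~\eqref{eq:pointer:stable-mixed} has the structure ``a reduced density operator is annihilated by the complementary projector,'' i.e.\ $Q\sigma Q^\dagger=0$ for the appropriate orthogonal projector $Q$ (namely $\obs{I}_{M}-\Pi_\lambda$, $\obs{I}_{M}-\Pi_\varnothing$, or $\obs{I}_{S}-\obs{P}_\lambda$). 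Writing $\sigma$ as the weighted sum of the reduced positive operators $\sigma_i$ associated to each $\ket{\Psi_i(t)}$, the equality $Q\sigma Q^\dagger=\sum_i p_i Q\sigma_i Q^\dagger=0$ forces $Q\sigma_i Q^\dagger=0$ for every $i$, because each summand is positive. Hence every pure component individually satisfies the corresponding pure-state subspace condition used in the proof of Theorem~\ref{thm:no-accuracy}.

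Given this, the plan is to apply Lemma~\ref{lemma:HegerfeldtRuijsenaars} to each $\ket{\Psi_i(t)}$ exactly as in the pure-state case. The relaxed persistence condition~\eqref{eq:pointer:stable-mixed} on $[T,T']$ yields $\ket{\Psi_i(t)}\in\hilbert_{S}\otimes\Pi_\lambda\hilbert_{M}$ for all $t\in[T,T']$, and the lemma extends this to all $t\in\R$. Evaluating at $t=0$ and recombining the pure states with weights $p_i$ gives $\tr_{S}\rho(0)=\Pi_\lambda\(\tr_{S}\rho(0)\)\Pi_\lambda^\dagger$, which is incompatible with the ready-state condition~\eqref{eq:mixed-ready-pointer-initial} because the eigenspaces $\Pi_\varnothing\hilbert_{M}$ and $\Pi_\lambda\hilbert_{M}$ of $\obs{Z}$ are mutually orthogonal. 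For the measurement case I would take any $\rho(0)$ satisfying~\eqref{eq:mixed-ready-pointer-initial} and~\eqref{eq:mixed-measurement-observed-initial} simultaneously and invoke accuracy to obtain~\eqref{eq:mixed-projected-pointer-final-lambda}; for the preparation case I would instead pick any $\rho(T)$ supported on $\hilbert_{S}\otimes\Pi_\lambda\hilbert_{M}$, set $\rho(0):=\uop{U}_T^{-1}\rho(T)\uop{U}_T$, and use accuracy to derive~\eqref{eq:mixed-preparation-observed-initial}, reaching the same contradiction.

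The main obstacle I expect is bookkeeping rather than new ideas: I must check the sum-of-positives step uniformly for both the pointer-side and the observed-system-side conditions, and verify that the relaxed persistence~\eqref{eq:pointer:stable-mixed} really does transport to each pure component (with a possible subtlety if $\rho(t)$ has infinite rank or its decomposition involves continuous spectra). A cleaner alternative, which sidesteps the choice of decomposition, is to purify $\rho(t)$ on an enlarged space $\hilbert_{S}\otimes\hilbert_{M}\otimes\hilbert_{E}$ where the extended Hamiltonian $\obs{H}\otimes\obs{I}_{E}$ remains bounded below, translate each reduced-state subspace condition on $\hilbert_{M}$ into a genuine subspace condition on the purification via a Schmidt-type argument, and then apply Theorem~\ref{thm:no-accuracy} directly. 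Either route requires essentially the same translation from reduced-state to pure-state subspace conditions, after which the Hegerfeldt--Ruijsenaars machinery carries the rest of the proof through unchanged.
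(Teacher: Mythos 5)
Your proposal is correct, but it takes a genuinely different route from the paper. The paper does not decompose $\rho$ into pure components: it regards $\rho(t)$ as a vector in the (complexified) Hilbert space of Hilbert--Schmidt operators, rewrites the reduced-state conditions as membership of $\rho(t)$ in the closed subspace $\wt{\Pi}_{\lambda}\,\cdot\,\wt{\Pi}_{\lambda}^\dagger$ of that operator space, and applies Lemma~\ref{lemma:HegerfeldtRuijsenaars} to the group $\rho\mapsto\uop{U}_t\rho\uop{U}_t^\dagger$ acting there. Your componentwise reduction --- decompose $\rho(0)=\sum_i p_i\ket{\Psi_i(0)}\bra{\Psi_i(0)}$, evolve each term unitarily, and use positivity ($\sum_i p_i\,Q\sigma_i Q^\dagger=0$ with every summand positive forces each to vanish, and $\tr\big(Q\,\tr_{S}(\ket{\Psi_i}\bra{\Psi_i})\,Q\big)=\|(\obs{I}_{S}\otimes Q)\Psi_i\|^2$ then places each $\Psi_i$ in $\hilbert_{S}\otimes\Pi_\lambda\hilbert_{M}$) --- is more elementary and, importantly, only ever invokes the lemma on the original Hilbert space, where the generator is $\obs{H}$ itself. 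That is a real advantage: on the operator space the generator of $\rho\mapsto\uop{U}_t\rho\uop{U}_t^\dagger$ is the commutator superoperator $\frac{1}{\hbar}[\obs{H},\cdot\,]$, whose spectrum lies in the difference set of the spectrum of $\obs{H}$ and is generally \emph{not} bounded from below when $\obs{H}$ is semibounded but unbounded above; so the paper's application of the lemma in that setting needs an extra justification that your argument never requires. Two small points to tighten: state explicitly that for a positive operator $\sigma$ the condition $\sigma=\Pi\sigma\Pi^\dagger$ is equivalent to $Q\sigma Q^\dagger=0$ with $Q=\obs{I}-\Pi$ (this is what licenses the passage to complementary projectors), and in the preparation case make sure the $\rho(0)$ you pull back from $\rho(T)$ is also required to satisfy the ready-state condition \eqref{eq:mixed-ready-pointer-initial}, since that is the condition your final contradiction is against.
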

\begin{proof}
The bounded linear operators on the Hilbert space $\hilbert$ form a complex vector space $\mc{B}\(\hilbert\)$, endowed with the \emph{Hilbert-Schmidt inner product}, defined for any two Hermitian operators $\obs{B},\obs{C}\in\mc{B}\(\hilbert\)$ by
\begin{equation}
\label{eq:HS}
\prodHS{\obs{B}}{\obs{C}}:=\tr\(\obs{B}^\dagger\obs{C}\).
\end{equation}

This inner product is inherited by the real vector subspace $\Herm\(\hilbert\)$ of bounded Hermitian operators on $\hilbert$.
The density operators form a convex set in $\Herm\(\hilbert\)$.

The unitary evolution operators $\uop{U}_t=e^{-\frac{i}{\hbar}t\obs{H}}$ evolve any Hermitian operator $\obs{B}$ into $\uop{U}_t\obs{B}\uop{U}_t^\dagger$, preserving the Hilbert-Schmidt inner product,
\begin{equation}
\label{eq:unitary-HS}
\prodHS{\obs{B}}{\obs{C}}=\prodHS{\uop{U}_t\obs{B}\uop{U}_t^\dagger}{\uop{U}_t\obs{C}\uop{U}_t^\dagger}.
\end{equation}
This applies in particular to density operators.

We define the following operators on $\hilbert_{S}\otimes\hilbert_{M}$,
\begin{equation}
\label{eq:extended-trace}
\begin{aligned}
\wt{\tr}_{S}(\rho) &:=
\obs{I}_{S}\otimes\tr_{S}(\rho)\\
\wt{\tr}_{M}(\rho) &:=
\tr_{M}(\rho)\otimes\obs{I}_{M}.\\
\end{aligned}
\end{equation}

They are linear operators, because the partial traces are linear maps.
Conditions \eqref{eq:mixed-ready-pointer-initial}--\eqref{eq:pointer:stable-mixed} can be rewritten in terms of the operators $\wt{\tr}_{S}$ and $\wt{\tr}_{M}$ and the projectors
\begin{equation}
\label{eq:extended-proj}
\begin{aligned}
\wt{\obs{P}}_{\lambda}&:=\obs{P}_{\lambda}\otimes\obs{I}_{M}, & & \\
\wt{\Pi}_{\lambda}&:=\obs{I}_{S}\otimes\Pi_{\lambda}, & \wt{\Pi}_{\varnothing}&:=\obs{I}_{S}\otimes\Pi_{\varnothing}, \\
\end{aligned}
\end{equation}
all operating on the total Hilbert space, but also on $\Herm\(\hilbert\)$, for example as in \eqref{eq:mixed-projected-pointer-final-lambda}.

In terms of the operators \eqref{eq:extended-trace} and \eqref{eq:extended-proj}, condition \eqref{eq:mixed-ready-pointer-initial} becomes $\wt{\tr}_{S}\rho(0)=\wt{\Pi}_{\varnothing}\(\wt{\tr}_{S}\rho(0)\)\wt{\Pi}_{\varnothing}^\dagger$.
Since the partial trace $\tr_{S}$ takes density operators on $\hilbert_{S}\otimes\Pi_{\varnothing}\hilbert$ into density operators on $\Pi_{\varnothing}\hilbert$, the operator $\wt{\tr}_{S}$ takes density operators on $\hilbert_{S}\otimes\Pi_{\varnothing}\hilbert$ into Hermitian operators on $\hilbert_{S}\otimes\Pi_{\varnothing}\hilbert$.
Because of this, when condition \eqref{eq:mixed-ready-pointer-initial} is satisfied,
\begin{equation}
\label{eq:mixed-ready-pointer-initial-HS}
\rho(0)=\wt{\Pi}_{\varnothing}\rho(0)\wt{\Pi}_{\varnothing}^\dagger.
\end{equation}

By a similar reasoning, when condition \eqref{eq:pointer:stable-mixed} is satisfied, for all $t\in(T,T')$,
\begin{equation}
\label{eq:pointer:stable-mixed-HS}
\rho(t)=\wt{\Pi}_{\lambda}\rho(t)\wt{\Pi}_{\lambda}^\dagger.
\end{equation}

Taking into account that density operators on $\hilbert$ are also vectors in the vector space $\Herm\(\hilbert\)$, we will apply Lemma \ref{lemma:HegerfeldtRuijsenaars} to the following case.
As a Hilbert space we use the vector space of positive semi-definite trace-class operators with finite Hilbert-Schmidt norm. Let us denote it by $\wt{\hilbert}$. It is a real Hilbert space.
Anticipating the potential objection that Lemma \ref{lemma:HegerfeldtRuijsenaars} was proved for complex, not for real Hilbert spaces, we can take the complexified $\wt{\hilbert}\otimes\C$ instead of $\wt{\hilbert}$, and extend the Hilbert-Schmidt inner product \eqref{eq:HS} to a Hermitian scalar product, obtaining a complex Hilbert space.
Then the unitary evolution operators $\uop{U}_t$, extended by linearity to $\wt{\hilbert}\otimes\C$, act unitarily on $\wt{\hilbert}\otimes\C$.
As the vector $\Psi_0$ from Lemma \ref{lemma:HegerfeldtRuijsenaars} we will take $\rho(0)$, which is a vector from the space $\Herm\(\hilbert\)$.
As a closed subspace $\hilbert'$ we take the subspace $\wt{\Pi}_{\lambda}\wt{\hilbert}\otimes\C$.

Therefore, we can apply Lemma \ref{lemma:HegerfeldtRuijsenaars} in this case too, obtaining that if the Hamiltonian is bounded from below, accurate measurements and preparations are impossible even for mixed states.
\end{proof}

\section{Possible objections}
\label{s:objections}

Despite the generality of Theorem~{\ref{thm:no-accuracy}} and especially of Theorem~{\ref{thm:no-accuracy-mixed}}, a mathematical result is as strong as its assumptions. Therefore, if one wants to reject these results to save both the Hamiltonian's boundedness from below and the accuracy of quantum measurements and preparations, one can try to reject these assumptions.

\begin{objection}[No perfect isolation]
\label{objection:environment}
Assumption~\ref{assumption:isolated} is not true, because the system composed of the measuring device and the observed system can never be perfectly isolated from the environment.
\end{objection}
\begin{proof}[Reply]
We can never be sure that a fluctuation from the environment doesn't happen, making the outcomes of the measurement accurate by a pure accident.
However, if we worry that the measurement is not perfectly isolated, we can take as our system $M$ the whole laboratory. We can even assume that the laboratory is on a spaceship in an inertial motion, far away from any other system. Or we can take the whole universe as our system, if we can assume that the universe is an isolated system.
Then, we replace the Hilbert space $\hilbert_{M}$ with $\hilbert_{M}\otimes\hilbert_{E}$, and the pointer observable $\obs{Z}$ with $\obs{Z}\otimes\obs{I}_{E}$, where $\hilbert_{E}$ represents the environment, and $\obs{I}_{E}$ is the identity operator on $\hilbert_{M}$.
Then, the proofs of Theorems~\ref{thm:no-accuracy} and ~\ref{thm:no-accuracy-mixed} remain unchanged.
\end{proof}

\begin{objection}[Time-dependence of the Hamiltonian]
\label{objection:time-dependent-Hamiltonian}
Lemma~\ref{lemma:HegerfeldtRuijsenaars} from Ref. \cite{HegerfeldtRuijsenaars1980RemarksOnCausalityLocalizationAndSpreadingOfWavePackets} assumes a time-independent Hamiltonian; however, the Hamiltonian is, in fact, time dependent due to the presence of the interaction term $\obs{H}_{S+M}$ in $\obs{H}=\obs{H}_{S}\otimes\obs{I}_{M}+\obs{I}_{S}\otimes\obs{H}_{M}+\mu\obs{H}_{S+M}$, which turns on at time 0 and turns off at time $t = T$. To treat the problem properly, this time dependence must be taken into account, as it is the central effect under investigation. If one does so, it will be found that the conclusions of Lemma~\ref{lemma:HegerfeldtRuijsenaars} do not hold.
\end{objection}
\begin{proof}[Reply]
Consider, for example, the spin measurement of a Silver atom using the Stern-Gerlach device, the interaction is considered irrelevant before the Silver atom enters the magnetic field of the device and after it leaves it. This is why many discussions of the measurement process use a Hamiltonian of the form $\obs{H}=\obs{H}_{S}\otimes\obs{I}_{M}+\obs{I}_{S}\otimes\obs{H}_{M}+\mu\obs{H}_{S+M}$, with $\mu=0$ outside of the time interval $[0,T]$. In practice, however, there is no perfect way to ensure that no interactions occur outside of {$[0,T]$}.

However, our proofs don't need to assume that the interaction is exactly zero before the measurement starts.
The Silver atom could very well already feel the magnetic field of the Stern-Gerlach device at $t=0$ or before, and yet, the total Hamiltonian can be time-independent, ensuring the validity of equation~\eqref{eq:premeasurement}.
We don't even need to assume that it is exactly zero for $t<0$ and $t>T$, because all we used from Assumption~\ref{assumption:stable} is that the pointer state is an eigenstate of the pointer observable, at least for a finite time interval $[T,T']$, as in equation~\eqref{eq:pointer:stable} or, more generally, as in equation~\eqref{eq:pointer:stable-mixed}.
The Hamiltonian can be assumed to be time-independent due to Assumption~\ref{assumption:isolated}, as explained in the Reply to Objection~\ref{objection:environment}.

Possible external influences that can make the Hamiltonian time-dependent are either noise from the environment, or controlled interactions that aim to restore the accuracy. In the first case, it is unlikely that external noise will restore the accuracy of the measurement, it will rather introduce more error. In the second case, intentional interactions can restore the accuracy of the measurement only if controlled by someone who already accurately knows the state for all the systems involved.
\end{proof}

\begin{objection}[Challenging pointer persistence]
\label{objection:pointer-persistence-unproven}
Assumption \ref{assumption:stable} is, well, an unproven assumption.  Such constructs have no place in a formal proof.
\end{objection}
\begin{proof}[Reply]
There are two points to unpack in this objection.
First, even if one thinks that Assumption~\ref{assumption:stable} is unproven, it still has a place in a formal proof. Every formal proof should transparently state the assumptions under which the result is derived, so that the domain of applicability of the result can be known.
Then, if one of the assumptions is rejected, the proof does not necessarily apply in that case.
What has no place in formal proof is not having assumptions, but having hidden assumptions.

So if one thinks that Assumption~{\ref{assumption:stable}} is wrong, and that we can do measurements even if their results are not temporarily recorded, one is free to reject the conclusion of Theorem \ref{thm:no-accuracy}. But one can't reject the proof itself on this basis alone, only the conclusion.

However, rejecting Assumption~{\ref{assumption:stable}} will not save the accuracy, quite the contrary:

\begin{quote}
\textbf{A perfectly accurate result that is not recorded can't be seen by the experimentalist or anyone else, so it is totally useless.}
\end{quote}

If Assumption~{\ref{assumption:stable}} is not made, the experimentalist can't trust the result, since it may be perturbed before being read. This would defeat the whole purpose of Objection~{\ref{objection:pointer-persistence-unproven}}, by making the result inaccurate.
\end{proof}

\begin{objection}
\label{objection:projection-helps}
The proof of Theorem~\ref{thm:no-accuracy} is based on the assumption of unitarity. But at the end the Projection Postulate is applied to reduce the superposition from equation~\eqref{eq:premeasurement} to a single term. So unitarity is violated at $t=T$, and Lemma~\ref{lemma:HegerfeldtRuijsenaars} can't be applied.
\end{objection}
\begin{proof}[Reply]
The proof derives a contradiction from the cases when no projection is needed, because the observed system was in an eigenstate of the observable.
See the details of the proof, Remark~\ref{rem:thm-explained}, and Section~\sref{s:methods}.
\end{proof}

\begin{question}
\label{question:no-collapse}
The proof of Theorem~\ref{thm:no-accuracy} is based on the condition~\eqref{eq:calibration-measurement}, respectively \eqref{eq:calibration-preparation}, and so it can be understood as a proof by \emph{reductio-ad-absurdum} that these conditions can't be satisfied independently of the initial state of the observed system.
And, indeed, there are situations when a separable state evolves into a separable state~\cite{Chowdhury2024EntanglementSeparabilityAndCorrelationTopologyOfInteractingQubits}.
\end{question}
\begin{proof}[Answer]
Indeed, one can take the proof of Theorem~\ref{thm:no-accuracy} as a proof by \emph{reductio-ad-absurdum} that conditions~\eqref{eq:calibration-measurement} and \eqref{eq:calibration-preparation} can't be satisfied independently of the initial state of the observed system.
That is, it is not excluded that the pointer state is not in the ``ready'' state $\varphi_{\varnothing}(0)$ for all possible initial states, but if the observed system is already in an eigenstate $\psi_{\lambda}$ of the observable, the pointer had to be in an eigenstate for $\lambda$, and not of $\varphi_{\varnothing}(0)$ as normally assumed. This will allow accuracy even if the Hamiltonian is bounded from below.
It will also have the advantage of ensuring a single outcome without breaking the unitary evolution with a projection or collapse.
Such a single-world unitary (no-collapse) proposal was shown in \cite{Stoica2012QMQuantumMeasurementAndInitialConditions} to require apparently very fine-tuned initial conditions that depend on the future measurement. In this case, both the observed system and the pointer state have to be, at $t=0$, eigenstates corresponding to the outcome $\lambda$. This appears as ``superdeterministic'' or ``retrocausal'', although it can also be interpreted as the result of a global consistency condition in the four-dimensional relativistic block universe \cite{Stoica2016OnTheWavefunctionCollapse,Stoica2021PostDeterminedBlockUniverse}. In \cite{Stoica2017TheUniverseRemembersNoWavefunctionCollapse,Stoica2021PostDeterminedBlockUniverse} it was shown that this is the only way to ensure the conservation laws (Everett's interpretation also ensures them, but for the totality of the branches, while in each branch conservation laws are violated).
In \cite{Stoica2024ObservationAsPhysication}, a new version of this single-world unitary proposal, which separates the observables for their physical meaning to eliminate the apparent superdeterminism or retrocausality, was proposed.
While, in the current context, it seems that rejecting conditions~\eqref{eq:calibration-measurement} and \eqref{eq:calibration-preparation} can allow for accuracy even if the Hamiltonian is bounded from below, it does not present a clear mechanism how this would be possible. By contrast, since unitarity without projection is a very strong constraint, it may restrict the accuracy even more. But the possibility from Question~\ref{question:no-collapse} is indeed open.
\end{proof}

\section{Discussion}
\label{s:discussion}

We have seen that if the Hamiltonian is bounded from below, accurate measurements and accurate preparations are impossible.
Therefore, if this is true, accurate reading and writing data, being measurements and preparations, are also impossible.
This doesn't prevent us from getting asymptotically closer to perfect accuracy, just from ever reaching it. This limitation is similar to the other limitations of quantum measurements reviewed in the Introduction, but if the Hamiltonian is bounded from below it is stronger, in some sense, than the Wigner-Araki-Yanase theorem \cite{wigner1952MessungQMOperatoren,Wigner1952MessungQMOperatorenPBusch2010EnTranslation,ArakiYanase1960MeasurementofQMOperators}, because it prevents measurement accuracy and non-disturbance independently and for all situations, and it does not need to use an additive conserved observable that does not commute with the observable $\obs{A}$.

Accuracy is important, especially if we want our technologies to be scalable and not to remain interesting experiments confined to laboratories or limited-purpose quantum computers. The resource costs to achieve even limited accuracy matter, because these resources are limited. This result affects quantum control \cite{DAlessandro2021IntroductionToQuantumControlAndDynamics}, quantum computing \cite{NielsenChuang2010QuantumComputationAndQuantumInformation}, and any quantum technology dependent on the accuracy of quantum preparations and measurements, inviting us to give more careful consideration to the limits imposed by the physical laws to the technological possibilities.

Busch and Jaeger advocated for the necessity to rethink the world based on unsharp observables, to embrace the ``quantum fuzziness'' of the world \cite{BuschJaeger2010UnsharpQuantumReality}.
Now we see that this becomes even more stringent.

Given that all that we know about the world, even its macroscopic properties, we know through observations, which are ultimately quantum, this would make everything we know about the external world to be inaccurate.
All we do in the world, even macroscopically, ultimately rely on quantum preparations, making our actions imprecise.
But is this really so? We certainly have very complex electronic devices, including programmable classical computers, and they seem to work very well. They also seem scalable, at least so far we didn't encounter any limitation of the memory they can have. While they are classical computers, they are also quantum systems, and not only because transistors employ quantum effects, but because the world is quantum. So we can consider the widespread and ubiquitous usage of information technology in our everyday life as the largest crowdsourced experiment that tested the accuracy of reading and writing bits. In fact, this experiment is a natural continuation of the longest crowdsourced experiment that showed that we can keep records of past events, for example by writing them down on a piece of paper. Even a mark on a tally stick is a physical state that can be distinguished from other physical states by observations that are, ultimately, quantum.
If the existence of recording devices or classical computers contradicts our results, could it be because the Hamiltonian is actually unbounded? We don't have a quantitative prediction of the possible accuracy in terms of the size or other properties of the measurement device, so it would be rushed to draw some conclusions from this. But we definitely need to explore this and find out.

At the same time, we may want to reconsider the assumption that our world's Hamiltonian is bounded from below, as an alternative way to regain some of the lost certainty.
This assumption leads to several results that may be problematic.
First, it was shown to forbid the existence of a time operator \cite{Pauli1980GeneralPrinciplesOfQuantumMechanics}, \cite{UnruhWald1989TimeAndTheInterpretationOfCanonicalQuantumGravity}.
Second, it was shown to imply that relativistic wave packets confined to a bounded region of space spread instantaneously in the entire space \cite{HegerfeldtRuijsenaars1980RemarksOnCausalityLocalizationAndSpreadingOfWavePackets}, \cite{Malament1996InDefenseOfDogmaWhyCannotBeARelativisticQuantumMechanicsOfParticles}.
Third, it implies that there is always a nonvanishing amplitude that the universe returns to a previous state \cite{Stoica2022ProblemOfIrreversibleChangeInQuantumMechanics}.
Finally, in this article it was shown that if the Hamiltonian is bounded from below, it prevents accurate measurements and preparations. 

Only time will tell how to solve this dilemma.
To avoid closing on a pessimistic note regarding accurate measurements and preparations, even if it turns out that they limit the possibility of scalable quantum computers, we have to remember what we learned from the history of quantum mechanics. Other results, initially perceived as limitations, are behind the incredible development of quantum technologies and other applications: results like complementarity, Bell's theorem \cite{Bell64BellTheorem}, the Kochen-Specker theorem \cite{KochenSpecker1967HiddenVariables}, the no cloning theorem \cite{Park1970NoCloning,WoottersZurek1982NoCloning} and so on, in the hands of visionary thinkers turned out to provide the revolutionary power of quantum technologies. Maybe these results rejected some of our initial expectations and intentions, but at the same time they opened unexpected avenues.

\textbf{Acknowledgments.} The author thanks Mohammad Hamed Mohammady and anonymous referees for valuable comments and suggestions offered to a previous version of the manuscript. 
Nevertheless, the author bears full responsibility for the article.

\section*{References}


\end{document}